\newtheorem{thm}{Theorem}[section]
\newtheorem{prop}[thm]{Proposition}
\newtheorem{fact}[thm]{Fact}
\newtheorem{rem}[thm]{Remark}
\numberwithin{equation}{section}
\begin{document}

\title{A Lax formulation of a generalized $q$-Garnier system}
\date{}
\author{Takao Suzuki}
\address{Department of Mathematics, Kindai University, 3-4-1, Kowakae, Higashi-Osaka, Osaka 577-8502, Japan}
\email{suzuki@math.kindai.ac.jp}

\maketitle

\begin{abstract}
Recently, a birational representation of an extended affine Weyl group of type $A_{mn-1}^{(1)}\times A_{m-1}^{(1)}\times A_{m-1}^{(1)}$ was proposed with the aid of a cluster mutation.
In this article we formulate this representation in a framework of a system of $q$-difference equations with $mn\times mn$ matrices.
This formulation is called a Lax form and is used to derive a generalization of the $q$-Garnier system.

Key Words: Affine Weyl group, Discrete Painlev\'{e} equation, Garnier system.

2010 Mathematics Subject Classification: 17B80, 34M55, 39A13.
\end{abstract}

\section{Introduction}

The Garnier system (in $n$ variables) was proposed as a generalization of the sixth Painlev\'e equation in \cite{G}.
It is derived from the isomonodromy deformation of a Fuchsian system of second order with $n+3$ regular singular points.
A $q$-analogue of the Garnier system was proposed from a viewpoint of a connection problem of a system of linear $q$-difference equations in \cite{Sak}.
Afterward it was studied in detail by a Pad\'e method in \cite{NY1,NY2}.

A group of symmetries for the Garnier system is isomorphic to the affine Weyl group $W(B_{n+3}^{(1)})$.
This fact was shown in \cite{K,Suz1}.
For the $q$-Garnier system, a symmetry structure was clarified recently.
In \cite{OS1} we formulated a birational representation of an extended affine Weyl group $\widetilde{W}(A_{2n+1}^{(1)}\times A_1^{(1)}\times A_1^{(1)})$ with the aid of a cluster mutation and derived the $q$-Garnier system as translations.
We also generalized this representation to that of $\widetilde{W}(A_{mn-1}^{(1)}\times A_{m-1}^{(1)}\times A_{m-1}^{(1)})$ in \cite{OS2}.

The aim of this article is to formulate the birational representation of $\widetilde{W}(A_{mn-1}^{(1)}\times A_{m-1}^{(1)}\times A_{m-1}^{(1)})$ again in a framework of a system of $q$-difference equations with $mn\times mn$ matrices.
This formulation is called a Lax form.
A Lax form for $\widetilde{W}(A_{2n+1}^{(1)}\times A_1^{(1)}\times A_1^{(1)})$ was partially given in a framework of the $q$-Drinfeld-Sokolov hierarchy in \cite{Suz2}.
Based on this previous work, we give a complete Lax form for $\widetilde{W}(A_{mn-1}^{(1)}\times A_{m-1}^{(1)}\times A_{m-1}^{(1)})$ in this article.

This article is organized as follows.
In Section \ref{Sec:Bir_Rep} we recall the group of the birational transformations which is isomorphic to $\widetilde{W}(A_{mn-1}^{(1)}\times A_{m-1}^{(1)}\times A_{m-1}^{(1)})$ given in \cite{OS2}.
In Section \ref{Sec:Lax} we introduce a system of $q$-difference equations with $mn\times mn$ matrices, whose compatibility conditions correspond to the group given in Section \ref{Sec:Bir_Rep}.
In Section \ref{Sec:Example} we formulate a generalized $q$-Garnier system in the case $(m,n)=(3,2)$ as an example.

\section{Birational representation}\label{Sec:Bir_Rep}

Recall that the affine Weyl group $W(A^{(1)}_{m-1})$ is generated by the generators $r_i$ $(i\in\mathbb{Z}_m)$ and the fundamental relations
\[
	r_0^2 = r_1^2 = 1,
\]
for $m=2$ and
\[
	r_i^2 = 1,\quad
	r_i\,r_j\,r_i = r_j\,r_i\,r_j\quad (|i-j|=1),\quad
	r_i\,r_j = r_j\,r_i\quad (|i-j|>1),
\]
for $m\geq3$.
Here we denote the quotient ring $\mathbb{Z}/m\mathbb{Z}$ by $\mathbb{Z}_m$.

Let $\varphi_{j,i}$ $(j\in\mathbb{Z}_{mn},i\in\mathbb{Z}_m)$ be dependent variables and $\alpha_j,\beta_i,\beta'_i$ $(j\in\mathbb{Z}_{mn},i\in\mathbb{Z}_m)$ be parameters defined by
\[
	\alpha_j = \prod_{i=0}^{m-1}\varphi_{j,i},\quad
	\beta_i = \prod_{j=0}^{mn-1}\varphi_{j,i},\quad
	\beta'_i = \prod_{j=0}^{mn-1}\varphi_{j,i+j}.
\]
We also set
\[
	\prod_{j=0}^{mn-1}\alpha_j = \prod_{i=0}^{m-1}\beta_i = \prod_{i=0}^{m-1}\beta'_i = \prod_{j=0}^{mn-1}\prod_{i=0}^{m-1}\varphi_{j,i} = q.
\]
Note that the parameters $\alpha_j$, $\beta_i$ and $\beta'_i$ correspond to multiplicative simple roots for $W(A^{(1)}_{mn-1})$, $W(A^{(1)}_{m-1})$ and $W(A^{(1)}_{m-1})$ respectively.
We define birational transformations $r_j$ $(j\in\mathbb{Z}_{mn})$ by
\begin{equation}\begin{split}\label{Eq:act_r}
	&r_j(\varphi_{j-1,i}) = \varphi_{j-1,i}\,\varphi_{j,i+1}\,\frac{P_{j,i+2}}{P_{j,i+1}},\quad
	r_j(\varphi_{j,i}) = \frac{1}{\varphi_{j,i+1}}\frac{P_{j,i}}{P_{j,i+2}},\quad
	r_j(\varphi_{j+1,i}) = \varphi_{j,i}\,\varphi_{j+1,i}\,\frac{P_{j,i+1}}{P_{j,i}}, \\
	&r_j(\varphi_{k,i}) = \varphi_{k,i}\quad (k\neq j,j\pm1),
\end{split}\end{equation}
where
\[
	P_{j,i} = \sum_{k=0}^{m-1}\prod_{l=0}^{k-1}\varphi_{j,i+l}.
\]
They act on the parameters as
\begin{align*}
	&r_j(\alpha_j) = \frac{1}{\alpha_j},\quad
	r_j(\alpha_{j\pm1}) = \alpha_{j\pm1}\,\alpha_j,\quad
	r_j(\alpha_k) = \alpha_k\quad (k\neq j,j\pm1),\quad
	r_j(\beta_i) = \beta_i,\quad
	r_j(\beta'_i) = \beta'_i.
\end{align*}
We also define birational transformations $s_i,s'_i$ $(i\in\mathbb{Z}_m)$ by
\begin{equation}\begin{split}\label{Eq:act_s_2}
	&s_i(\varphi_{j,i}) = \frac{1}{\varphi_{j+1,i}}\frac{Q_{j,i}}{Q_{j+2,i}},\quad
	s_i(\varphi_{j,i+1}) = \varphi_{j,i}\,\varphi_{j,i+1}\,\varphi_{j+1,i}\,\frac{Q_{j+2,i}}{Q_{j,i}}, \\
	&s'_i(\varphi'_{j,i}) = \frac{1}{\varphi'_{j+1,i}}\frac{Q'_{j,i}}{Q'_{j+2,i}},\quad
	s'_i(\varphi'_{j,i+1}) = \varphi'_{j,i}\,\varphi'_{j,i+1}\,\varphi'_{j+1,i}\,\frac{Q'_{j+2,i}}{Q'_{j,i}},
\end{split}\end{equation}
for $m=2$ and
\begin{equation}\begin{split}\label{Eq:act_s_3}
	&s_i(\varphi_{j,i-1}) = \varphi_{j,i-1}\,\varphi_{j+1,i}\,\frac{Q_{j+2,i}}{Q_{j+1,i}},\quad
	s_i(\varphi_{j,i}) = \frac{1}{\varphi_{j+1,i}}\frac{Q_{j,i}}{Q_{j+2,i}},\quad
	s_i(\varphi_{j,i+1}) = \varphi_{j,i}\,\varphi_{j,i+1}\,\frac{Q_{j+1,i}}{Q_{j,i}}, \\
	&s_i(\varphi_{j,k}) = \varphi_{j,k}\quad (k\neq i,i\pm1), \\
	&s'_i(\varphi'_{j,i-1}) = \varphi'_{j,i-1}\,\varphi'_{j+1,i}\,\frac{Q'_{j+2,i}}{Q'_{j+1,i}},\quad
	s'_i(\varphi'_{j,i}) = \frac{1}{\varphi'_{j+1,i}}\frac{Q'_{j,i}}{Q'_{j+2,i}},\quad
	s'_i(\varphi'_{j,i+1}) = \varphi'_{j,i}\,\varphi'_{j,i+1}\,\frac{Q'_{j+1,i}}{Q'_{j,i}}, \\
	&s'_i(\varphi'_{j,k}) = \varphi'_{j,k}\quad (k\neq i,i\pm1),
\end{split}\end{equation}
for $m\geq3$, where
\[
	Q_{j,i} = \sum_{k=0}^{mn-1}\prod_{l=0}^{k-1}\varphi_{j+l,i},\quad
	Q'_{j,i} = \sum_{k=0}^{mn-1}\prod_{l=0}^{k-1}\varphi'_{j+l,i}.
\]
and $\varphi'_{j,i}=\varphi_{-j,i-j}$.
They act on the parameters as
\begin{align*}
	&s_i(\beta_i) = \frac{1}{\beta_i},\quad
	s_i(\beta_{i+1}) = \beta_{i+1}\,\beta_i^2,\quad
	s_i(\alpha_j) = \alpha_j,\quad
	s_i(\beta'_k) = \beta'_k, \\
	&s'_i(\beta'_i) = \frac{1}{\beta'_i},\quad
	s'_i(\beta'_{i+1}) = \beta'_{i+1}(\beta'_i)^2,\quad
	s'_i(\alpha_j) = \alpha_j,\quad
	s'_i(\beta_k) = \beta_k,
\end{align*}
for $m=2$ and
\begin{align*}
	&s_i(\beta_i) = \frac{1}{\beta_i},\quad
	s_i(\beta_{i\pm1}) = \beta_{i\pm1}\,\beta_i,\quad
	s_i(\beta_k) = \beta_k\quad (k\neq i,i\pm1),\quad
	s_i(\alpha_j) = \alpha_j,\quad
	s_i(\beta'_l) = \beta'_l, \\
	&s'_i(\beta'_i) = \frac{1}{\beta'_i},\quad
	s'_i(\beta'_{i\pm1}) = \beta'_{i\pm1}\,\beta'_i,\quad
	s'_i(\beta'_k) = \beta'_k\quad (k\neq i,i\pm1),\quad
	s'_i(\alpha_j) = \alpha_j,\quad
	s'_i(\beta_l) = \beta_l,
\end{align*}
for $m\geq3$.

\begin{fact}[{\cite{IIO,MOT}}]
If we set
\[
	G = \langle r_0,\ldots,r_{mn-1}\rangle,\quad
	H = \langle s_0,\ldots,s_{m-1}\rangle,\quad
	H' = \langle s'_0,\ldots,s'_{m-1}\rangle,
\]
then the groups $G,$ $H$ and $H'$ are isomorphic to the affine Weyl groups $W(A^{(1)}_{mn-1})$, $W(A^{(1)}_{m-1})$ and $W(A^{(1)}_{m-1})$ respectively.
Moreover, any two groups are mutually commutative.
\end{fact}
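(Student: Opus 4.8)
The plan is to establish the statement in three stages. First I would exhibit surjective homomorphisms from the three abstract affine Weyl groups onto $G$, $H$ and $H'$ by verifying the Coxeter relations recalled at the beginning of this section directly on the generators. Second I would promote these surjections to isomorphisms by a faithfulness argument carried out on the level of the parameters. Finally I would check the mutual commutativity of the three families of generators.

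For the first stage the relations split into the cases $mn=2$, $m=2$ and $mn\ge 3$, $m\ge 3$, exactly as in the two presentations recalled above. The computations rest on the recursive structure of the sums $P_{j,i}=\sum_{k=0}^{m-1}\prod_{l=0}^{k-1}\varphi_{j,i+l}$: substituting \eqref{Eq:act_r}, the factors $P_{j,i+l}$ in a product over consecutive $l$ telescope, so that $r_j^2=1$ collapses to an identity among monomials, and the braid relation $r_jr_{j+1}r_j=r_{j+1}r_jr_{j+1}$ reduces to a single rational identity in the two columns $j,j+1$ shared by the two reflections. The far commutativity $r_jr_k=r_kr_j$ for non-adjacent $j,k$ is immediate, since \eqref{Eq:act_r} then moves disjoint sets of variables. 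The generators $s_i$ and $s'_i$ are treated the same way with $P$ replaced by the transverse sums $Q_{j,i}$ and $Q'_{j,i}$ appearing in \eqref{Eq:act_s_2} and \eqref{Eq:act_s_3}, and the relabeling $\varphi'_{j,i}=\varphi_{-j,i-j}$ lets one deduce the $s'$-relations from the $s$-relations, so only one of the two needs to be checked.

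For the second stage I would pass to the induced action on the multiplicative simple roots. Inspecting the parameter formulas displayed after \eqref{Eq:act_r} and \eqref{Eq:act_s_3}, one sees that $r_j$, $s_i$ and $s'_i$ act on $\{\alpha_k\}$, $\{\beta_k\}$ and $\{\beta'_k\}$ precisely as the simple reflections of the three affine Weyl groups act on their multiplicative simple roots, with the respective null roots specialized to the constant $q$ through $\prod_j\alpha_j=\prod_i\beta_i=\prod_i\beta'_i=q$. Because $q$ is generic, a nontrivial translation in any of these affine Weyl groups scales some simple root by a nonzero integer power of $q$ and hence acts nontrivially; combined with the faithful permutation action of the underlying finite Weyl group, this shows that the parameter representation is faithful. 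Consequently the surjections of the first stage have trivial kernel and are isomorphisms.

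The third stage, mutual commutativity, holds trivially on parameters since each family fixes the simple roots of the other two, but as birational maps it must be verified on the variables $\varphi_{j,i}$. The organizing principle is that $r_j$ is built from the sums $P_{j,i}$ taken along the second index, whereas $s_i$ and $s'_i$ are built from the sums $Q_{j,i}$ and $Q'_{j,i}$ taken along the first index; I would verify $r_js_i=s_ir_j$ and $r_js'_i=s'_ir_j$ by direct computation on the finitely many variables moved by both maps, the transversality of the two summation directions being what forces the commutator to vanish. I expect the relation $s_is'_{i'}=s'_{i'}s_i$ to be the main obstacle: although $s_i$ and $s'_{i'}$ are written in the two variable sets $\{\varphi_{j,i}\}$ and $\{\varphi'_{j,i}\}$, the diagonal identification $\varphi'_{j,i}=\varphi_{-j,i-j}$ couples them nontrivially, so establishing this commutativity — rather than the Coxeter relations — is where the real work lies, and it is here that I would exploit the cluster-mutation origin of the transformations to keep the computation under control.
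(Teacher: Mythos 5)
The paper offers no proof of this Fact at all: it is imported wholesale from \cite{IIO,MOT}, where the three commuting Weyl group actions are realized by sequences of cluster mutations and the Coxeter relations follow from general periodicity properties of mutation (the birational formulas \eqref{Eq:act_r}--\eqref{Eq:act_s_3} being the composed mutations written out). Your proposal is a genuinely different, self-contained route: verify the relations by hand using the recursions for $P_{j,i}$, $Q_{j,i}$, $Q'_{j,i}$, then upgrade ``the generators satisfy the relations'' to an actual isomorphism via faithfulness of the induced action on the multiplicative simple roots. That second stage is the most valuable part of your plan: a bare verification of relations only yields surjections $W(A^{(1)}_{mn-1})\twoheadrightarrow G$ etc., and injectivity is exactly what a citation-free reader would otherwise miss. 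Your argument for it is sound --- the parameter action is the exponentiated geometric (Tits) representation with the null root specialized to a generic $q$, translations scale some $\alpha_k$ (resp.\ $\beta_k$, $\beta'_k$) by a nontrivial power of $q$, and faithfulness on the parameter subfield of $\mathbb{C}(\varphi_{j,i})$ forces faithfulness on the whole field. You also correctly observe that mutual commutativity of $G$, $H$, $H'$ cannot be read off from the parameters and must be checked on the $\varphi_{j,i}$. What the paper's route buys is brevity and a conceptual explanation of why the relations hold; what yours buys is independence from the cluster formalism and an explicit injectivity mechanism.

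Two caveats. First, your claim that far commutativity $r_jr_k=r_kr_j$ is ``immediate, since \eqref{Eq:act_r} then moves disjoint sets of variables'' is false for $|j-k|=2$: both $r_j$ and $r_{j+2}$ act nontrivially on the shared column $\varphi_{j+1,\,\cdot}$. The relation still holds, because $r_{j+2}$ fixes every $\varphi_{j,\,\cdot}$ and hence every $P_{j,\,\cdot}$ (and symmetrically), so the two multipliers applied to $\varphi_{j+1,i}$ are each invariant under the other reflection; but this is a short computation, not a disjointness argument. Second, the braid relations and the commutativity $s_is'_{i'}=s'_{i'}s_i$ --- which you rightly single out as the hard case because of the coupling $\varphi'_{j,i}=\varphi_{-j,i-j}$ --- are precisely where essentially all of the work lies, and your proposal defers them to ``direct computation'' without exhibiting the identities that make them close up. As a strategy the proposal is correct and would, if the computations were carried through, prove the Fact; as written it establishes the architecture but not the computational core that \cite{IIO,MOT} supply by other means.
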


\begin{rem}
In this article we interpret a composition of transformations in terms of automorphisms of the field of rational functions $\mathbb{C}(\varphi_{j,i})$.
For example, the compositions $r_0\,r_1,r_1\,r_0$ act on the parameter $\alpha_0$ as
\[
	r_0\,r_1(\alpha_0) = r_0(\alpha_0\,\alpha_1) = r_0(\alpha_0)\,r_0(\alpha_1) = \alpha_1,\quad
	r_1\,r_0(\alpha_0) = r_1\left(\frac{1}{\alpha_0}\right) = \frac{1}{r_1(\alpha_0)} = \frac{1}{\alpha_0\,\alpha_1}.
\]
\end{rem}

In addition, we define birational transformations $\pi_1,\pi_2$ by
\begin{align}
	\pi_1(\varphi_{j,i}) &= \varphi_{j+1,i+1}, \label{Eq:act_pi1} \\
	\pi_2(\varphi_{j,i}) &= \varphi_{j,i+1}. \label{Eq:act_pi2}
\end{align}
They act on the parameters as
\begin{align*}
	&\pi_1(\alpha_j) = \alpha_{j+1},\quad
	\pi_1(\beta_i) = \beta_{i+1},\quad
	\pi_1(\beta'_i) = \beta'_i, \\
	&\pi_2(\alpha_j) = \alpha_j,\quad
	\pi_2(\beta_i) = \beta_{i+1},\quad
	\pi_2(\beta'_i) = \beta'_{i+1}.
\end{align*}

\begin{fact}[{\cite{OS2}}]
The transformations $\pi_1,\pi_2$ satisfy fundamental relations
\begin{align*}
	&\pi_1^{mn} = 1,\quad
	\pi_2^m = 1,\quad
	\pi_1\,\pi_2 = \pi_2\,\pi_1, \\
	&r_j\,\pi_1 = \pi_1\,r_{j-1},\quad
	s_i\,\pi_1 = \pi_1\,s_{i-1},\quad
	s'_i\,\pi_1 = \pi_1\,s'_i, \\
	&r_j\,\pi_2 = \pi_2\,r_j,\quad
	s_i\,\pi_2 = \pi_2\,s_{i-1},\quad
	s'_i\,\pi_2 = \pi_2\,s'_{i-1},
\end{align*}
for $j\in\mathbb{Z}_{mn}$ and $i\in\mathbb{Z}_m$.
\end{fact}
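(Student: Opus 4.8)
The plan is to verify each relation directly on the field generators. Since every transformation appearing in the statement is an automorphism of $\mathbb{C}(\varphi_{j,i})$, and two automorphisms that coincide on the generators $\varphi_{j,i}$ are equal, it suffices to check that both sides of each relation produce the same image of an arbitrary $\varphi_{k,l}$ (and, where convenient, of the auxiliary $\varphi'_{j,i}$).

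The three relations $\pi_1^{mn}=1$, $\pi_2^m=1$ and $\pi_1\pi_2=\pi_2\pi_1$ are purely combinatorial. From \eqref{Eq:act_pi1} and \eqref{Eq:act_pi2} one computes $\pi_1^{mn}(\varphi_{j,i})=\varphi_{j+mn,i+mn}$ and $\pi_2^m(\varphi_{j,i})=\varphi_{j,i+m}$; reading the first index in $\mathbb{Z}_{mn}$ and the second in $\mathbb{Z}_m$, and using $mn\equiv 0\pmod m$ for the second index in $\pi_1^{mn}$, both reduce to the identity, while commutativity is immediate because $\pi_1$ and $\pi_2$ act by independent shifts of the index pair.

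For the intertwining relations I would first record how $\pi_1$ and $\pi_2$ transport the structure polynomials $P_{j,i}$ and $Q_{j,i}$. Since these are sums of monomials in the $\varphi$ sharing a common shift pattern, the defining index shifts propagate verbatim: $\pi_2(P_{j,i})=P_{j,i+1}$, $\pi_2(Q_{j,i})=Q_{j,i+1}$, and $\pi_1(P_{j,i})=P_{j+1,i+1}$, $\pi_1(Q_{j,i})=Q_{j+1,i+1}$. Substituting these into \eqref{Eq:act_r} and \eqref{Eq:act_s_2}--\eqref{Eq:act_s_3} and matching images term by term then yields $r_j\pi_1=\pi_1r_{j-1}$, $s_i\pi_1=\pi_1s_{i-1}$, $r_j\pi_2=\pi_2r_j$ and $s_i\pi_2=\pi_2s_{i-1}$; for instance, acting on $\varphi_{j-1,i}$, both $r_j\pi_1$ and $\pi_1r_{j-1}$ return $\varphi_{j,i+2}^{-1}P_{j,i+1}/P_{j,i+3}$.

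The one point requiring genuine care, and which I expect to be the main obstacle, is the behaviour of $\pi_1$ and $\pi_2$ on the primed variables $\varphi'_{j,i}=\varphi_{-j,i-j}$ on which the $s'_i$ are built. A short index computation gives $\pi_1(\varphi'_{j,i})=\varphi'_{j-1,i}$ and $\pi_2(\varphi'_{j,i})=\varphi'_{j,i+1}$, whence $\pi_1(Q'_{j,i})=Q'_{j-1,i}$ and $\pi_2(Q'_{j,i})=Q'_{j,i+1}$. The crucial feature is that $\pi_1$ shifts only the \emph{first} index of $\varphi'$ while leaving the second fixed, so conjugation by $\pi_1$ sends $s'_i$ to $s'_i$ rather than to a neighbour, which explains the asymmetric relation $s'_i\pi_1=\pi_1s'_i$; by contrast $\pi_2$ raises the second index by one, giving $s'_i\pi_2=\pi_2s'_{i-1}$. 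Once these transport rules are established, each of the remaining relations reduces to the same term-by-term matching of images as above.
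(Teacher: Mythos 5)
Your verification is correct, but note that the paper itself offers no proof of this statement: it is imported as a \emph{Fact} with a citation to [OS2], so there is no internal argument to compare against. Your self-contained check on generators is sound. The combinatorial relations $\pi_1^{mn}=1$, $\pi_2^m=1$, $\pi_1\pi_2=\pi_2\pi_1$ follow exactly as you say from reading the first index in $\mathbb{Z}_{mn}$ and the second in $\mathbb{Z}_m$. Your transport rules $\pi_1(P_{j,i})=P_{j+1,i+1}$, $\pi_2(P_{j,i})=P_{j,i+1}$, $\pi_1(Q_{j,i})=Q_{j+1,i+1}$, $\pi_2(Q_{j,i})=Q_{j,i+1}$ are immediate from the definitions, and your sample computation on $\varphi_{j-1,i}$ checks out: both $r_j\pi_1$ and $\pi_1 r_{j-1}$ send it to $\varphi_{j,i+2}^{-1}P_{j,i+1}/P_{j,i+3}$ (consistent with the paper's convention, stated in its Remark, that composition means composition of field automorphisms). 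Most importantly, you correctly isolate the one genuinely non-obvious point: since $\varphi'_{j,i}=\varphi_{-j,i-j}$, one gets $\pi_1(\varphi'_{j,i})=\varphi'_{j-1,i}$ (second index unchanged) and $\pi_2(\varphi'_{j,i})=\varphi'_{j,i+1}$, which is exactly what produces the asymmetric relations $s'_i\pi_1=\pi_1 s'_i$ versus $s_i\pi_1=\pi_1 s_{i-1}$. One small point worth making explicit if you write this up: the map $(j,i)\mapsto(-j,i-j)$ is a bijection on $\mathbb{Z}_{mn}\times\mathbb{Z}_m$, so the primed variables form an alternative generating set of $\mathbb{C}(\varphi_{j,i})$ and checking the $s'_i$ relations on them alone is legitimate; and the $m=2$ case of \eqref{Eq:act_s_2} should be run through separately since its formulas differ from the $m\geq3$ case, though the same transport rules make it routine.
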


Hence we can regard the semi-direct product $\langle G,H,H'\rangle\rtimes\langle\pi_1,\pi_2\rangle$ as an extended affine Weyl group $\widetilde{W}(A_{mn-1}^{(1)}\times A_{m-1}^{(1)}\times A_{m-1}^{(1)})$.
On the other hand, the group $\widetilde{W}(A_{mn-1}^{(1)}\times A_{m-1}^{(1)}\times A_{m-1}^{(1)})$ contains an abelian normal subgroup generated by translations.
Combining them, we can provide a class of generalized $q$-Garnier systems as translations.
Recall that the $q$-Garnier system was obtained as the case $m=2$ in \cite{OS1}.

\section{Lax form}\label{Sec:Lax}

Let us introduce an independent variable $z$ satisfying
\[
	r_j(z) = z\quad (j\in\mathbb{Z}_{mn}),\quad
	s_i(z) = z,\quad
	s'_i(z) = z\quad (i\in\mathbb{Z}_m),\quad
	\pi_1(z) = z,
\]
and
\begin{equation}\label{Eq:act_pi2_z}
	\pi_2(z) = q^{1/m}z.
\end{equation}
We also set
\[
	\zeta = z\prod_{j=1}^{mn-1}\prod_{i=0}^{m-2}\alpha_j^{(mn-j)/m}\beta_i^{(i+1)/m}.
\]
Then the birational transformations act on $\zeta$ as
\begin{align*}
	&r_0(\zeta) = \alpha_0^n\,\zeta,\quad
	r_1(\zeta) = \frac{1}{\alpha_1^n}\,\zeta,\quad
	r_k(\zeta) = \zeta\quad (k\neq0,1), \\
	&s_{m-2}(\zeta) = \frac{1}{\beta_{m-2}}\,\zeta,\quad
	s_{m-1}(\zeta) = \beta_{m-1}\,\zeta,\quad
	s_k(\zeta) = \zeta\quad (k\neq m-2,m-1),\quad
	s'_i(\zeta) = \zeta, \\
	&\pi_1(\zeta) = \frac{\beta_{m-1}}{\alpha_1^n}\,\zeta,\quad
	\pi_2(\zeta) = \beta_{m-1}\,\zeta.
\end{align*}

We first give a Lax form for the birational transformations $\pi_1,\pi_2$.
Let $E_{j_1,j_2}$ be a $mn\times mn$ matrix with $1$ in $(j_1,j_2)$-th entry and $0$ elsewhere.
Consider $mn\times mn$ matrices
\[
	\Pi_1 = \zeta^{-\log_q\alpha_1}\left(\sum_{j=1}^{mn-1}\prod_{k=0}^{j-1}\frac{1}{\varphi_{1,k}}\,E_{j,j+1}+\frac{1}{\alpha_1^n\prod_{i=0}^{m-2}\beta_i}\,\zeta\,E_{mn,1}\right),
\]
and
\[
	\Pi_2 = \sum_{j=1}^{mn}\prod_{k=1}^{j-1}\varphi_{k,j-1}\,E_{j,j} + \sum_{j=1}^{mn-1}E_{j,j+1} + \frac{1}{\prod_{i=0}^{m-2}\beta_i}\,\zeta\,E_{mn,1}.
\]
We also set
\[
	M = \pi_2^{m-1}(\Pi_2)\,\pi_2^{m-2}(\Pi_2)\,\ldots\,\pi_2(\Pi_2)\,\Pi_2,
\]
and $T_{q,z}=\pi_2^m$.

\begin{thm}
Under transformations \eqref{Eq:act_pi2} and \eqref{Eq:act_pi2_z}, the compatibility condition of a system of linear $q$-difference equations
\[
	T_{q,z}(\psi) = M\,\psi,\quad
	\pi_1(\psi) = \Pi_1\,\psi,
\]
with a fundamental relation
\begin{equation}\label{Eq:comp_pi1_pi2}
	\pi_2(\Pi_1)\,\Pi_2 = \beta_{m-1}^{-\log_q\alpha_1}\varphi_{1,0}\,\pi_1(\Pi_2)\,\Pi_1,
\end{equation}
is equivalent to transformation \eqref{Eq:act_pi1}.
\end{thm}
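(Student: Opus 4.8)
The plan is to read off the compatibility condition of the linear system and then reduce it, in two stages, to the prescribed action \eqref{Eq:act_pi1}. Since $\pi_1$ and $\pi_2$ commute and $T_{q,z}=\pi_2^m$, the operators $\pi_1$ and $T_{q,z}$ commute; applying the two linear equations $T_{q,z}(\psi)=M\,\psi$ and $\pi_1(\psi)=\Pi_1\,\psi$ to $\psi$ in both orders and equating the results gives the Lax compatibility condition $\pi_1(M)\,\Pi_1 = T_{q,z}(\Pi_1)\,M$. The theorem therefore amounts to showing that this matrix identity is equivalent to \eqref{Eq:act_pi1}, and I would organize the argument around the auxiliary relation \eqref{Eq:comp_pi1_pi2}.

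First I would show that \eqref{Eq:comp_pi1_pi2} implies the compatibility condition by a telescoping argument exploiting the factorization $M=\pi_2^{m-1}(\Pi_2)\cdots\pi_2(\Pi_2)\,\Pi_2$. Applying $\pi_2^k$ to \eqref{Eq:comp_pi1_pi2} and using $\pi_2^k\pi_1=\pi_1\pi_2^k$ yields, for each $k$, the relation $\pi_2^{k+1}(\Pi_1)\,\pi_2^k(\Pi_2)=\pi_2^k(c)\,\pi_1(\pi_2^k(\Pi_2))\,\pi_2^k(\Pi_1)$ with $c=\beta_{m-1}^{-\log_q\alpha_1}\varphi_{1,0}$. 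Carrying $T_{q,z}(\Pi_1)=\pi_2^m(\Pi_1)$ from the left through the factors of $M$ one at a time telescopes the product into $\bigl(\prod_{k=0}^{m-1}\pi_2^k(c)\bigr)\,\pi_1(M)\,\Pi_1$. The scalar prefactor then collapses: since $\pi_2^k(\beta_{m-1})=\beta_{m-1+k}$, $\pi_2^k(\varphi_{1,0})=\varphi_{1,k}$ and $\pi_2(\alpha_1)=\alpha_1$, the product equals $\bigl(\prod_{i=0}^{m-1}\beta_i\bigr)^{-\log_q\alpha_1}\prod_{k=0}^{m-1}\varphi_{1,k}=q^{-\log_q\alpha_1}\,\alpha_1=1$, using the normalizations $\prod_i\beta_i=q$ and $\prod_k\varphi_{1,k}=\alpha_1$. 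Hence \eqref{Eq:comp_pi1_pi2} is exactly the compatibility condition, and the problem is reduced to proving the equivalence of \eqref{Eq:comp_pi1_pi2} with \eqref{Eq:act_pi1}.

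The second and main stage is the entry-by-entry analysis of \eqref{Eq:comp_pi1_pi2}. Both $\Pi_1$ and $\Pi_2$ are explicit near-cyclic bidiagonal matrices, so $\pi_2(\Pi_1)\,\Pi_2$ and $\pi_1(\Pi_2)\,\Pi_1$ are supported on the diagonal, the first and second superdiagonals, and the wrap-around corner positions. Substituting the known action of $\pi_2$ on $\zeta$, on the $\varphi_{j,i}$ and on the parameters into the left-hand side, and leaving $\pi_1(\varphi_{j,i})$ as unknowns inside $\pi_1(\Pi_2)$ on the right-hand side, I would equate the two matrices entry by entry. The resulting scalar equations form a determined, essentially triangular system; solving it (or equivalently substituting the candidate \eqref{Eq:act_pi1} and checking each entry, uniqueness of the solution giving the converse) should force $\pi_1(\varphi_{j,i})=\varphi_{j+1,i+1}$.

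The hard part will be the bookkeeping of the scalar prefactors in this entry comparison. One must reconcile the factor $\zeta^{-\log_q\alpha_1}$ attached to $\Pi_1$, the shifts $\pi_2(\zeta)=\beta_{m-1}\,\zeta$ and $\pi_1(\zeta)=\beta_{m-1}\,\alpha_1^{-n}\,\zeta$, and the gauge constant $c$, so that the powers of $\zeta$, of $\alpha_1$ and of the $\beta_i$ match simultaneously on every band and at the corner, with all indices read modulo $mn$ in the first slot and modulo $m$ in the second. The diagonal entries of $\Pi_2$, which are the partial products $\prod_{k=1}^{j-1}\varphi_{k,j-1}$, together with these corner and wrap-around terms, are where the verification is most delicate, and getting the index arithmetic consistent there is the crux of the proof.
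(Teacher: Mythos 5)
Your proposal is correct and follows essentially the same route as the paper: the telescoping of \eqref{Eq:comp_pi1_pi2} through the factorization $M=\pi_2^{m-1}(\Pi_2)\cdots\pi_2(\Pi_2)\,\Pi_2$, with the scalar prefactor collapsing to $q^{-\log_q\alpha_1}\alpha_1=1$, is exactly the paper's displayed chain of equalities for $T_{q,z}(\Pi_1)=\pi_1(M)\,\Pi_1\,M^{-1}$. Your second stage (entry-by-entry comparison of $\pi_2(\Pi_1)\,\Pi_2$ with $\beta_{m-1}^{-\log_q\alpha_1}\varphi_{1,0}\,\pi_1(\Pi_2)\,\Pi_1$) is precisely what the paper dispatches as ``a direct calculation,'' so the two arguments coincide.
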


\begin{proof}
If we assume transformation \eqref{Eq:act_pi1}, then we can show system \eqref{Eq:comp_pi1_pi2} by a direct calculation.
It follows that
\begin{align*}
	T_{q,z}(\Pi_1) &= \pi_2^{m-1}\pi_2(\Pi_1) \\
	&= \beta_{m-2}^{-\log_q\alpha_1}\varphi_{1,m-1}\,\pi_2^{m-1}\pi_1(\Pi_2)\,\pi_2^{m-1}(\Pi_1)\,\pi_2^{m-1}(\Pi_2^{-1}) \\
	&= \beta_{m-3}^{-\log_q\alpha_1}\beta_{m-2}^{-\log_q\alpha_1}\varphi_{1,m-2}\,\varphi_{1,m-1}\,\pi_2^{m-1}\pi_1(\Pi_2)\,\pi_2^{m-2}\pi_1(\Pi_2)\,\pi_2^{m-2}(\Pi_1)\,\pi_2^{m-2}(\Pi_2^{-1})\,\pi_2^{m-1}(\Pi_2^{-1}) \\
	&= \ldots \\
	&= q^{-\log_q\alpha_1}\alpha_1\,\pi_2^{m-1}\pi_1(\Pi_2)\,\ldots\,\pi_2\pi_1(\Pi_2)\,\pi_1(\Pi_2)\,\Pi_1\,\Pi_2^{-1}\,\pi_2(\Pi_2^{-1})\,\ldots\,\pi_2^{m-1}(\Pi_2^{-1}) \\
	&= \pi_1(M)\,\Pi_1\,M^{-1}.
\end{align*}
Inversely, we can restore transformation \eqref{Eq:act_pi1} by going back the way we came.
\end{proof}

We next give a Lax form for the other birational transformations.
Let $I$ be the identity matrix.
Consider $mn\times mn$ matrices
\begin{align*}
	R_0 &= \zeta^{\log_q\alpha_0}\left(\sum_{j=1}^{mn}\frac{P_{0,j-1}\prod_{k=0}^{j-2}\varphi_{0,k}}{P_{0,0}}E_{j,j}+\frac{q\,(1-\alpha_0)}{\varphi_{0,m-1}\,P_{0,0}}\,\frac{1}{\zeta}\,E_{1,mn}\right), \\
	R_1 &= \zeta^{-\log_q\alpha_1}\left(\sum_{j=1}^{mn}\frac{\varphi_{1,1}\,P_{1,2}}{P_{1,j}\prod_{k=1}^{j-1}\varphi_{1,k}}E_{j,j}+\frac{1-\alpha_1}{P_{1,1}}\,E_{2,1}\right), \\
	R_j &= I + \frac{(1-\alpha_j)\prod_{k=1}^{j-1}\varphi_{k,j-1}}{P_{j,j}}\,E_{j+1,j}\quad (j=2,\ldots,mn-1),
\end{align*}
and
\begin{align*}
	S_i &= I + \sum_{k=0}^{n-1}\left(\frac{Q_{mk+i+2,i}\prod_{l=2}^{mk+i+1}\varphi_{l,i}}{Q_{2,i}}-1\right)E_{mk+i+1,mk+i+1} \\
	&\quad + \sum_{k=0}^{n-1}\left(\frac{Q_{1,i}}{Q_{mk+i+2,i}\prod_{l=1}^{mk+i+1}\varphi_{l,i}}-1\right)E_{mk+i+2,mk+i+2} \\
	&\quad + \sum_{k=0}^{n-1}\frac{\beta_i-1}{\varphi_{1,i}\,Q_{2,i}}\,E_{mk+i+1,mk+i+2}\quad (i=0,\ldots,m-2), \\
	S_{m-1} &= I + \sum_{k=1}^{n}\left(\frac{Q_{mk+1,m-1}\prod_{l=2}^{mk}\varphi_{l,m-1}}{Q_{2,m-1}}-1\right)E_{mk,mk} \\
	&\quad + \sum_{k=1}^{n-1}\left(\frac{Q_{1,m-1}}{Q_{mk+1,m-1}\prod_{l=1}^{mk}\varphi_{l,m-1}}-1\right)E_{mk+1,mk+1} \\
	&\quad + \sum_{k=1}^{n-1}\frac{\beta_{m-1}-1}{\varphi_{1,m-1}\,Q_{2,m-1}}\,E_{mk,mk+1} + \frac{\beta_{m-1}-1}{\varphi_{1,m-1}\,Q_{2,m-1}\prod_{l=0}^{m-2}\beta_l}\,\zeta\,E_{mn,1}, \\
	S'_i &= I + \sum_{k=0}^{n-1}\left(\frac{\varphi'_{0,i}\,Q'_{1,i}}{Q'_{0,i}}-1\right)E_{mk+i+2,mk+i+2}\quad (i=0,\ldots,m-2).
\end{align*}
We also set
\[
	S'_{m-1} = s'_{m-1}(\Pi_2^{-1})\,\pi_2(S'_{m-2})\,\Pi_2.
\]
The explicit formula of $S'_{m-1}$ is not given here.

\begin{rem}
Strangely the matrix $S'_{m-1}$ is rational in $\zeta$, is not diagonal and hence is much more complicated than the others.
The cause has not been clarified yet.
However, this is not a serious matter.
We can avoid using the matrix $S'_{m-1}$, or equivalently the transformation $s'_{m-1}$, when we define the group of translations.
\end{rem}

\begin{thm}
Under transformations \eqref{Eq:act_pi2} and \eqref{Eq:act_pi2_z}, the compatibility condition of a system of linear $q$-difference equations
\[
	T_{q,z}(\psi) = M\,\psi,\quad
	r_j(\psi) = R_j\,\psi\quad (j\in\mathbb{Z}_{mn}),\quad
	s_i(\psi) = S_i\,\psi,\quad
	s'_i(\psi) = S'_i\,\psi\quad (i\in\mathbb{Z}_m),
\]
with fundamental relations
\begin{equation}\begin{split}\label{Eq:comp_rs_pi2}
	&\pi_2(R_j)\,\Pi_2 = r_j(\Pi_2)\,R_j\quad (j\in\mathbb{Z}_{mn}), \\
	&\pi_2(S_{i-1})\,\Pi_2 = s_i(\Pi_2)\,S_i,\quad
	\pi_2(S'_{i-1})\,\Pi_2 = s'_i(\Pi_2)\,S'_i\quad (i\in\mathbb{Z}_m),
\end{split}\end{equation}
is equivalent to transformations \eqref{Eq:act_r}, \eqref{Eq:act_s_2} and \eqref{Eq:act_s_3}.
\end{thm}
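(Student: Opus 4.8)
The plan is to mirror the structure of the proof of the preceding theorem: establish the result in one direction by assuming the birational transformations \eqref{Eq:act_r}, \eqref{Eq:act_s_2}, \eqref{Eq:act_s_3}, verifying the fundamental relations \eqref{Eq:comp_rs_pi2} by direct computation, and then deriving the evolution equation for each matrix $R_j$, $S_i$, $S'_i$ under $T_{q,z}$; the converse follows by reversing the argument. The compatibility of the system $T_{q,z}(\psi)=M\,\psi$, $g(\psi)=G\,\psi$ (for $g\in\{r_j,s_i,s'_i\}$ with corresponding matrix $G$) is precisely the relation
\begin{equation*}
	T_{q,z}(G)\,M = g(M)\,G,
\end{equation*}
so the real content is to show that this matrix identity holds \emph{if and only if} the generator $g$ acts on the $\varphi_{j,i}$ as prescribed.

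First I would observe that, because $M=\pi_2^{m-1}(\Pi_2)\cdots\pi_2(\Pi_2)\,\Pi_2$ is a product of shifted copies of $\Pi_2$, the two relations in \eqref{Eq:comp_rs_pi2} can be ``telescoped'' to produce the required evolution $T_{q,z}(G)\,M=g(M)\,G$. Concretely, for $g=r_j$ one inserts $\pi_2(R_j)\,\Pi_2=r_j(\Pi_2)\,R_j$ repeatedly: starting from $T_{q,z}(R_j)=\pi_2^m(R_j)$, one commutes $R_j$ leftward through each factor $\pi_2^k(\Pi_2)$ of $M$, at each step picking up a $\pi_2^k$-shift of the basic relation and replacing $\pi_2^k(\Pi_2)$ by $\pi_2^k r_j(\Pi_2)=r_j\pi_2^k(\Pi_2)$ (using $r_j\pi_2=\pi_2 r_j$). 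After $m$ steps this yields exactly $T_{q,z}(R_j)\,M=r_j(M)\,R_j$. The same telescoping works for $s_i$ and $s'_i$, using the commutation rules $s_i\,\pi_2=\pi_2\,s_{i-1}$ and $s'_i\,\pi_2=\pi_2\,s'_{i-1}$ from the second Fact; this is the structurally clean part of the argument and reduces everything to verifying the single relation \eqref{Eq:comp_rs_pi2} for each generator.

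The core computation is therefore the verification of \eqref{Eq:comp_rs_pi2} itself. Here I would exploit the sparse, near-diagonal shape of the matrices: $\Pi_2$ is bidiagonal plus a single corner entry, while each $R_j$ (for $j\ge 2$) differs from the identity in one subdiagonal entry, $R_0,R_1$ are diagonal up to one off-diagonal entry, and the $S_i$ have a banded block structure indexed by $k$. Consequently both sides of each relation in \eqref{Eq:comp_rs_pi2} are supported on only a few entries, and matching them reduces to a handful of scalar identities among the $\varphi_{j,i}$, the partial products $P_{j,i}$, $Q_{j,i}$, and $\zeta$. The key technical inputs will be the recursions $P_{j,i}=1+\varphi_{j,i}P_{j,i+1}$ and the analogous telescoping identity for $Q_{j,i}$, together with the action of $g$ on $P$, $Q$ and on $\zeta$ recorded above; substituting the definitions \eqref{Eq:act_r}–\eqref{Eq:act_s_3} and simplifying should collapse each entry-wise comparison to a tautology.

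\textbf{The main obstacle} I anticipate is the relation $\pi_2(S'_{i-1})\,\Pi_2=s'_i(\Pi_2)\,S'_i$ at the boundary index, where $S'_{m-1}$ was \emph{defined} by $S'_{m-1}=s'_{m-1}(\Pi_2^{-1})\,\pi_2(S'_{m-2})\,\Pi_2$ rather than given explicitly; for $i=m-1$ the relation reads $\pi_2(S'_{m-2})\,\Pi_2=s'_{m-1}(\Pi_2)\,S'_{m-1}$, which holds \emph{by construction} once one checks that $s'_{m-1}(\Pi_2)\,s'_{m-1}(\Pi_2^{-1})=I$, i.e.\ that the defining formula is consistent. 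The genuine difficulty is that, unlike the other $S$-matrices, $S'_{m-1}$ is non-diagonal and rational in $\zeta$ (as flagged in the Remark), so the telescoping that proves $T_{q,z}(S'_{m-1})\,M=s'_{m-1}(M)\,S'_{m-1}$ cannot be carried out by a simple entry count and instead must be deduced formally from the definition of $S'_{m-1}$ together with $s'_{m-1}\,\pi_2=\pi_2\,s'_{m-2}$ and the already-established $i=m-2$ case. I would handle this last generator by a formal manipulation at the level of the defining relation rather than by explicit matrix entries, and—following the Remark—note that it may in any case be omitted when constructing the lattice of translations.
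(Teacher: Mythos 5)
Your proposal follows essentially the same route as the paper: verify the fundamental relations \eqref{Eq:comp_rs_pi2} entrywise by direct calculation using the telescoping identities for $P_{j,i}$, $Q_{j,i}$, $Q'_{j,i}$, then iterate those relations through the factors $\pi_2^k(\Pi_2)$ of $M$ (via $s_i\,\pi_2=\pi_2\,s_{i-1}$, etc.) to obtain $T_{q,z}(G)\,M=g(M)\,G$, and reverse the argument for the converse. The only slip is the stated recursion $P_{j,i}=1+\varphi_{j,i}\,P_{j,i+1}$; the correct identity is $\varphi_{j,i}\,P_{j,i+1}=P_{j,i}+\alpha_j-1$ (the paper's sign appears to be a typo as well), but this does not affect the structure of the argument.
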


\begin{proof}
We prove the formulae for the transformations $s_i$.
If we assume transformations \eqref{Eq:act_s_2} and \eqref{Eq:act_s_3}, then we can show the second equation of \eqref{Eq:comp_rs_pi2} by a direct calculation with
\[
	\varphi_{j,i}\,Q_{j+1,i} = Q_{j,i} + \beta_j + 1.
\]
It follows that
\begin{align*}
	S_i &= s_i(\Pi_2^{-1})\,\pi_2(S_{i-1})\,\Pi_2 \\
	&= s_i(\Pi_2^{-1})\,\pi_2s_{i-1}(\Pi_2^{-1})\,\pi_2^2(S_{i-2})\,\pi_2(\Pi_2)\,\Pi_2 \\
	&= \ldots \\
	&= s_i(\Pi_2^{-1})\,\pi_2s_{i-1}(\Pi_2^{-1})\,\ldots\,\pi_2^{m-1}s_{i+1}(\Pi_2^{-1})\,\pi_2^m(S_i)\,\pi_2^{m-1}(\Pi_2)\,\ldots\,\pi_2(\Pi_2)\,\Pi_2 \\
	&= s_i(M^{-1})\,T_{q,z}(S_i)\,M.
\end{align*}
Recall that $\pi_2\,s_{i-1}=s_i\,\pi_2$.
Inversely, we can restore transformations \eqref{Eq:act_s_2} and \eqref{Eq:act_s_3} by going back the way we came.

We can prove the formulae for the other transformations in a similar manner by using
\[
	\varphi_{j,i}\,P_{j,i+1} = P_{j,i} + \alpha_j + 1\quad (j\in\mathbb{Z}_{mn},\ i\in\mathbb{Z}_m),
\]
and
\[
	Q'_{j,i} + \varphi'_{j+1,i}\,Q'_{j+2,i} = (1+\varphi'_{j,i})\,Q'_{j+1,i}\quad (j\in\mathbb{Z}_{mn},\ i\in\mathbb{Z}_m).
\]
We don't state its detail here.
\end{proof}

\begin{rem}
The above definition of the matrices is suggested by \cite{Suz2}, in which we propose a $q$-analogue of the Drinfeld-Sokolov hierarchy of type $A_{mn-1}^{(1)}$ corresponding to the partition $(n,\ldots,n)$ of $mn$.
A relationship between our Lax form and the cluster algebra is not completely understood.
It is a future problem.
\end{rem}

\begin{rem}
The transformations $r_j$ were given by gauge transformations of the $q$-KP hierarchy in \cite{KNY1}.
Besides, the transformations $s_i$ (or $s'_i$) were given with the aid of a factorization of a matrix and permutations of the factors in \cite{KNY1,KNY2,P1,P2}.
\end{rem}

\section{Example}\label{Sec:Example}

In the previous section we obtained a Lax form for the extended affine Weyl group $\langle G,H,H'\rangle\rtimes\langle\pi_1,\pi_2\rangle$.
Hence we can derive a class of generalized $q$-Garnier systems together with their Lax pairs from the group of translations systematically.
Since the case $m=2$ has been already studied in detail in \cite{OS1,OS2}, we consider another case as an example in this section.

\begin{rem}
In Remark 4.2 of \cite{OS1}, we conjectured that the translation $s_1\,r_1\,\ldots\,r_{2n-1}\,\pi_1$ for $m=2$ implies a variation of the $q$-Garnier system given in \S3.2.4 of \cite{NY2}.
We can show that this conjecture is true by using the Lax form.
Since the proof can be given in a similar manner as that given in \S6 of \cite{OS1}, we omit it here.
\end{rem}

Let $m=3$ and $n=2$.
Then the matrix $M$ is described as
\[
	M = \begin{pmatrix}1&P_{1,1}&P^{*}_{1,2}&1&0&0\\0&\alpha_1&\frac{\alpha_1\,P_{2,2}}{\varphi_{1,1}}&\varphi_{1,0}\,P^{*}_{2,0}&1&0\\0&0&\alpha_1\,\alpha_2&\frac{\alpha_1\,\alpha_2\,P_{3,0}}{\varphi_{1,2}\,\varphi_{2,2}}&\varphi_{1,1}\,\varphi_{2,1}\,P^{*}_{3,1}&1\\\frac{\zeta}{\beta_0\,\beta_1}&0&0&\alpha_1\,\alpha_2\,\alpha_3&\frac{\alpha_1\,\alpha_2\,\alpha_3\,P_{4,1}}{\varphi_{1,0}\,\varphi_{2,0}\,\varphi_{3,0}}&\varphi_{1,2}\,\varphi_{2,2}\,\varphi_{3,2}\,P^{*}_{4,2}\\\frac{\varphi_{1,0}\,\varphi_{2,0}\,\varphi_{3,0}\,\varphi_{4,0}\,P^{*}_{5,0}\,\zeta}{\beta_0\,\beta_1}&\frac{\zeta}{\beta_1}&0&0&\alpha_1\,\alpha_2\,\alpha_3\,\alpha_4&\frac{\alpha_1\,\alpha_2\,\alpha_3\,\alpha_4\,P_{5,2}}{\varphi_{1,1}\,\varphi_{2,1}\,\varphi_{3,1}\,\varphi_{4,1}}\\\frac{\alpha_1\,\alpha_2\,\alpha_3\,\alpha_4\,\alpha_5\,P_{0,0}\,\zeta}{\beta_0\,\beta_1\,\varphi_{1,2}\,\varphi_{2,2}\,\varphi_{3,2}\,\varphi_{4,2}\,\varphi_{5,2}}&\frac{\varphi_{1,1}\,\varphi_{2,1}\,\varphi_{3,1}\,\varphi_{4,1}\,\varphi_{5,1}\,P^{*}_{0,1}\,\zeta}{\beta_1}&\zeta&0&0&\alpha_1\,\alpha_2\,\alpha_3\,\alpha_4\,\alpha_5\end{pmatrix},
\]
where
\[
	P_{j,i} = 1 + \varphi_{j,i} + \varphi_{j,i}\,\varphi_{j,i+1},\quad
	P^{*}_{j,i} = 1 + \varphi_{j,i} + \varphi_{j,i}\,\varphi_{j+1,i}.
\]
Note that the dependent variables and the parameters satisfy a periodic condition
\[
	\varphi_{j,i} = \varphi_{j+6,i} = \varphi_{j,i+3},\quad
	\alpha_j = \alpha_{j+6},\quad
	\beta_i = \beta_{i+3},\quad
	\beta'_i = \beta'_{i+3}.
\]
We consider a translation
\[
	\tau = s_0\,s_1\,s'_0\,s'_1\,\pi_2,
\]
which acts on the parameters as
\[
	\tau(\alpha_j) = \alpha_j,\quad
	\tau(\beta_i) = q^{-\delta_{i,0}+\delta_{i,2}}\,\beta_i,\quad
	\tau(\beta'_i) = q^{-\delta_{i,0}+\delta_{i,2}}\,\beta'_i,
\]
where $\delta_{i,k}$ stands for the Kronecker's delta.
Then the action of $\tau$ on the dependent variables $\varphi_{j,i}$ is derived from the compatibility condition of a Lax pair
\begin{equation}\label{Eq:EO4_Painleve}
	T_{q,z}(T)\,M=\tau(M)\,T,\quad
	T = s_1s'_0s'_1\pi_2(S_0)\,s'_0s'_1\pi_2(S_1)\,s'_1\pi_2(S'_0)\,\pi_2(S'_1)\,\Pi_2.
\end{equation}
System \eqref{Eq:EO4_Painleve} can be regarded as a system of meromorphic $q$-difference equations of eighth order; see Remark 4.3 below.
We don't give its explicit formula here.

In the following, we introduce a particular solution of system \eqref{Eq:EO4_Painleve} in terms of a linear $q$-difference equation.
Assume that
\[
	P_{1,1} = P^{*}_{1,2} = P_{4,1} = P^{*}_{4,2} = 0,
\]
which contains a constraint between parameters
\[
	\alpha_1^2\,\alpha_2\,\alpha_4^2\,\alpha_5 = \frac{\beta_0\,\beta'_2}{\beta_2\,\beta'_0}.
\]
Also assume that
\[
	\beta'_0\,\beta'_2 = 1.
\]
Then there exist two invariants
\[
	\tau(\varphi_{0,0}\,\varphi_{0,2}\,\varphi_{1,0}) = \varphi_{0,0}\,\varphi_{0,2}\,\varphi_{1,0},\quad
	\tau(\varphi_{1,2}\,\varphi_{2,0}\,\varphi_{2,2}) = \varphi_{1,2}\,\varphi_{2,0}\,\varphi_{2,2}.
\]
Now we normalize that
\[
	\varphi_{0,0}\,\varphi_{0,2}\,\varphi_{1,0} = c,\quad
	\varphi_{1,2}\,\varphi_{2,0}\,\varphi_{2,2} = -\frac{(c-1)\,\alpha_2\,\beta_0\,\beta'_2}{c\,\alpha_1\,\alpha_2\,\alpha_4\,\alpha_5-\beta_0\,\beta'_2},
\]
where $c$ is an arbitrary constant.
Then we obtain a system of meromorphic $q$-difference equations with one independent variable $\beta'_2$, three dependent variables $(1+\varphi_{0,0})\,\varphi_{0,2},\frac{\varphi_{2,0}}{\varphi_{3,2}},(1+\varphi_{3,0})\,\varphi_{2,0}$ and six parameters $\alpha_1,\alpha_2,\alpha_3,\alpha_4,\alpha_5,\beta_0\,\beta'_2$.
We don't give its explicit formula here.

Let $x_0,x_1,x_2,x_3$ be dependent variables such that
\begin{align*}
	\frac{\tau(x_0)}{x_0} &= \frac{\beta'_2-\alpha_1}{\alpha_1\,\beta'_2} - \frac{-c\,\alpha_1\,\alpha_2\,\alpha_4\,\alpha_5\,(\alpha_4-1)+\alpha_1\,\alpha_2\,\alpha_4^2\,\alpha_5-\beta_0\,\beta'_2}{(c-1)\,\alpha_1^2\,\alpha_2^2\,\alpha_4^2\,\alpha_5}\,\frac{\varphi_{2,0}}{\varphi_{3,2}} \\
	&\quad + \frac{c\,(\alpha_1\,\alpha_2\,\alpha_4\,\alpha_5-\beta_0\,\beta'_2)}{(c-1)\,\alpha_1\,\alpha_2\,\beta_0\,\beta'_2}\,(1+\varphi_{3,0})\,\varphi_{2,0},
\end{align*}
and
\begin{align*}
	\frac{x_1}{x_0} &= \frac{1}{c-1}\,(1+\varphi_{0,0})\,\varphi_{0,2} - \frac{c\,(\alpha_0-1)+\alpha_0\,(\alpha_1-1)}{(c-1)(\alpha_0\,\alpha_1-1)}, \\
	\frac{x_2}{x_0} &= \frac{c\,\alpha_1\,\alpha_2\,\alpha_4\,\alpha_5-\beta_0\,\beta'_2}{c-1}\,\frac{\varphi_{2,0}}{\varphi_{3,2}}, \\
	\frac{x_3}{x_0} &= \frac{c\,\alpha_1\,\alpha_2\,\alpha_3\,\alpha_4\,\alpha_5\,(\alpha_4-1)+\beta_0\,\beta'_2\,(\alpha_3-1)}{(c-1)\,\alpha_1\,\alpha_2\,\alpha_4\,\alpha_5\,(\alpha_3\,\alpha_4-1)}\,\frac{\varphi_{2,0}}{\varphi_{3,2}} + \frac{c}{c-1}(1+\varphi_{3,0})\,\varphi_{2,0}.
\end{align*}
Recall that $\alpha_0\,\alpha_1\,\alpha_2\,\alpha_3\,\alpha_4\,\alpha_5=q$.

\begin{prop}
A vector of variables $x={}^t(x_0,x_1,x_2,x_3)$ satisfies a system of linear $q$-difference equation
\begin{equation}\label{Eq:EO4}
	\tau(x) = \left(A_0+\beta'_2\,A_1\right)x,
\end{equation}
with $4\times4$ matrices
\begin{align*}
	A_0 &= \begin{pmatrix}\frac{1}{\alpha_1}&0&-\frac{(\alpha_4-1)(\alpha_1\,\alpha_2\,\alpha_3\,\alpha_4^2\,\alpha_5-\beta_0\,\beta'_2)}{\alpha_1^2\,\alpha_2^2\,\alpha_4^2\,\alpha_5\,\beta_0\,\beta'_2\,(\alpha_3\,\alpha_4-1)}&\frac{\alpha_1\,\alpha_2\,\alpha_4\,\alpha_5-\beta_0\,\beta'_2}{\alpha_1\,\alpha_2\,\beta_0\,\beta'_2}\\0&\alpha_0&-\frac{(\alpha_0-1)(\alpha_4-1)(\alpha_0\,\alpha_1^2\,\alpha_2\,\alpha_3\,\alpha_4^2\,\alpha_5-\beta_0\,\beta'_2)}{\alpha_1^2\,\alpha_2^2\,\alpha_4^2\,\alpha_5\,\beta_0\,\beta'_2\,(\alpha_0\,\alpha_1-1)(\alpha_3\,\alpha_4-1)}&\frac{(\alpha_0-1)(\alpha_0\,\alpha_1^2\,\alpha_2\,\alpha_4\,\alpha_5-\beta_0\,\beta'_2)}{\alpha_1\,\alpha_2\,\beta_0\,\beta'_2\,(\alpha_0\,\alpha_1-1)}\\0&0&\frac{1}{\alpha_1\,\alpha_2\,\alpha_3\,\alpha_4}&0\\0&0&0&\frac{1}{\alpha_1\,\alpha_2}\end{pmatrix}, \\
	A_1 &= \begin{pmatrix}-1&0&0&0\\0&-1&0&0\\-\frac{(\alpha_1-1)(\alpha_0\,\alpha_1^2\,\alpha_2\,\alpha_4\,\alpha_5-\beta_0\,\beta'_2)}{\alpha_1\,\alpha_3\,(\alpha_0\,\alpha_1-1)}&-\frac{\alpha_1\,\alpha_2\,\alpha_4\,\alpha_5-\beta_0\,\beta'_2}{\alpha_3}&-\frac{1}{\alpha_1\,\alpha_2\,\alpha_3}&0\\\frac{(\alpha_1-1)(\alpha_3-1)(\alpha_0\,\alpha_1^2\,\alpha_2\,\alpha_3\,\alpha_4^2\,\alpha_5-\beta_0\,\beta'_2)}{\alpha_1^2\,\alpha_2\,\alpha_3\,\alpha_4\,\alpha_5\,(\alpha_0\,\alpha_1-1)(\alpha_3\,\alpha_4-1)}&\frac{(\alpha_3-1)(\alpha_1\,\alpha_2\,\alpha_3\,\alpha_4^2\,\alpha_5-\beta_0\,\beta'_2)}{\alpha_1\,\alpha_2\,\alpha_3\,\alpha_4\,\alpha_5\,(\alpha_3\,\alpha_4-1)}&0&-\frac{1}{\alpha_1\,\alpha_2\,\alpha_3}\end{pmatrix}.
\end{align*}
\end{prop}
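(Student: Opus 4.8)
The plan is to verify the matrix identity \eqref{Eq:EO4} row by row. Dividing the $i$-th component of $\tau(x)=(A_0+\beta'_2A_1)x$ by $x_0$ turns the claim into the four scalar statements
\[
\tau\!\left(\frac{x_i}{x_0}\right)\frac{\tau(x_0)}{x_0}=\sum_{j=0}^{3}\bigl((A_0)_{ij}+\beta'_2(A_1)_{ij}\bigr)\frac{x_j}{x_0}\qquad(i=0,1,2,3),
\]
where I have used $\tau(x_i)/x_0=\tau(x_i/x_0)\,\tau(x_0)/x_0$ together with $\tau(x_0/x_0)=1$. By the defining formulae, $x_1/x_0$, $x_2/x_0$, $x_3/x_0$ are affine-linear in the three dependent variables $(1+\varphi_{0,0})\varphi_{0,2}$, $\varphi_{2,0}/\varphi_{3,2}$, $(1+\varphi_{3,0})\varphi_{2,0}$, and $\tau(x_0)/x_0$ is affine-linear in the last two; moreover $\tau$ fixes every $\alpha_j$, sends $\beta'_2\mapsto q\beta'_2$, and fixes the combination $\beta_0\beta'_2$. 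Hence the entries of $A_0$ and $A_1$ are genuine constants of the reduced system, and the case $i=0$ is literally the prescribed expression for $\tau(x_0)/x_0$.

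For $i=1,2,3$ I would first make the evolution explicit. The action of $\tau$ on the $\varphi_{j,i}$ is fixed by the compatibility condition \eqref{Eq:EO4_Painleve}; imposing $P_{1,1}=P^{*}_{1,2}=P_{4,1}=P^{*}_{4,2}=0$ and $\beta'_0\beta'_2=1$ collapses this to a closed system in the three dependent variables, and the two $\tau$-invariants, normalized by $\varphi_{0,0}\varphi_{0,2}\varphi_{1,0}=c$ and $\varphi_{1,2}\varphi_{2,0}\varphi_{2,2}=-(c-1)\alpha_2\beta_0\beta'_2/(c\,\alpha_1\alpha_2\alpha_4\alpha_5-\beta_0\beta'_2)$, let me eliminate $\varphi_{1,0},\varphi_{0,0},\varphi_{1,2},\varphi_{2,2}$. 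With the $\tau$-images of the three dependent variables in hand, I would substitute them into $\tau(x_i/x_0)$, multiply by the given $\tau(x_0)/x_0$, simplify using the invariant relations, and re-express the outcome through $x_1/x_0,x_2/x_0,x_3/x_0$. Reading off the coefficient of each $x_j/x_0$ and the constant (and $\beta'_2$-linear) part then yields the $i$-th rows of $A_0$ and $A_1$, which I would compare termwise with the stated matrices.

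The hard part will be showing that these products, which are \emph{a priori} rational and nonlinear in the dependent variables, in fact reduce to affine-linear combinations of $x_1/x_0,x_2/x_0,x_3/x_0$, so that a genuinely constant pair $(A_0,A_1)$ can be extracted at all. This collapse is exactly what the four vanishing conditions, the relation $\beta'_0\beta'_2=1$, and the particular normalization of the two $\tau$-invariants are designed to produce: they cut out the locus on which the $6\times6$ linear problem $T_{q,z}(\psi)=M\psi$ restricts to a four-dimensional invariant subspace, and \eqref{Eq:EO4} is precisely the induced $4\times4$ system, with $A_0$ and $A_1$ descending from the block structure of $M$ and of $T$ in \eqref{Eq:EO4_Painleve}. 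This viewpoint both guarantees that such a linearization exists and explains the triangular shapes of $A_0$ and $A_1$; what remains after it is the bookkeeping of entries, which I expect to be lengthy but mechanical once the three evolution equations are recorded.
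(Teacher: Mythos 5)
The paper's entire proof of this proposition is the single sentence ``We can prove this proposition by a direct calculation,'' and your proposal is precisely that calculation, organized in a reasonable way: derive the reduced evolution of the three dependent variables from the compatibility condition \eqref{Eq:EO4_Painleve} on the constraint locus, substitute into $\tau(x_i/x_0)\,\tau(x_0)/x_0$, and match coefficients of $x_j/x_0$ row by row, noting that all matrix entries depend only on the $\tau$-invariant quantities $\alpha_j$ and $\beta_0\beta'_2$. The one small overstatement is that the $i=0$ row is not ``literally'' the prescribed formula for $\tau(x_0)/x_0$ --- that row of $(A_0+\beta'_2A_1)x$ still contains $x_2/x_0$ and $x_3/x_0$, whose definitions must be substituted before the two expressions can be compared --- but this is part of the same termwise bookkeeping and does not affect the soundness of the plan.
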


We can prove this proposition by a direct calculation.

\begin{rem}
In a continuous limit $q\to1$, system \eqref{Eq:EO4} reduces to the rigid system with the spectral type $\{22,211,1111\}$ whose solution is expressed in terms of Simpson's even four hypergeometric function.
Moreover, system \eqref{Eq:EO4_Painleve} reduces to the isomonodromy deformation equation of eighth order with the spectral type $\{31,31,31,22,1111\}$.
\end{rem}

\begin{rem}
Another generalization of the $q$-Garnier system is proposed together with its particular solution expressed in terms of the $q$-hypergeometric function $\mathcal{F}_{N,M}$ in \cite{P1,P2}.
We have not clarified a relationship between our result and that of \cite{P1,P2} yet.
It is a future problem.
\end{rem}

\section*{Acknowledgement}

This work was supported by JSPS KAKENHI Grant Number 20K03645.


\end{document}